\DeclareMathOperator*{\argmin}{arg\,min}
\newcommand{\bp}{\begin{proof} \small }
\newcommand{\ep}{\end{proof} \normalsize}
\newcommand{\epx}{\end{proof} \small}
\newcommand{\bpa}{\begin{proofappx} \footnotesize }
\newcommand{\epa}{\end{proofappx} \small }
\newtheorem{theorem}{Theorem}
\newtheorem*{theorem*}{Theorem}
\newtheorem*{proposition*}{Proposition}
\newtheorem*{corollary*}{Corollary}
\newtheorem*{lemma*}{Lemma}
\newtheorem*{assumption*}{Assumption}
\newtheorem*{definition*}{Definition}
\newtheorem*{claim*}{Claim}
\newcommand{\be}{\begin{equation}}
\newcommand{\ee}{\end{equation}}
\newcommand{\bs}{\begin{subequations}}
\newcommand{\es}{\end{subequations}}
\newcommand{\bq}{\begin{eqnarray}}
\newcommand{\eq}{\end{eqnarray}}
\newcommand{\bqn}{\begin{eqnarray*}}
\newcommand{\eqn}{\end{eqnarray*}}
\newcommand{\ba}{\left[ \begin{array}}
\newcommand{\ea}{\\ \end{array} \right]}
\newcommand{\ben}{\begin{enumerate}}
\newcommand{\een}{\end{enumerate}}
\def\d{{\boldsymbol{d}}}
\def\real{{\mathchoice%
{\hbox{\rm\setbox1=\hbox{I}\copy1\kern-.45\wd1 R}}
{\hbox{\rm\setbox1=\hbox{I}\copy1\kern-.45\wd1 R}}
{\hbox{\scriptsize\rm\setbox1=\hbox{I}\copy1\kern-.45\wd1 R}}
{\hbox{\scriptsize\rm\setbox1=\hbox{I}\copy1\kern-.45\wd1 R}}}}
\def\Zint{{\mathchoice{\setbox1=\hbox{\sf Z}\copy1\kern-.75\wd1\box1}
{\setbox1=\hbox{\sf Z}\copy1\kern-.75\wd1\box1}
{\setbox1=\hbox{\scriptsize\sf Z}\copy1\kern-.75\wd1\box1}
{\setbox1=\hbox{\scriptsize\sf Z}\copy1\kern-.75\wd1\box1}}}
\newcommand{\complex}{ \hbox{\rm C\kern-0.45em\rule[.07em]{.02em}{.58em}%
\kern 0.43em}}
\begin{document}
	%
\title{Task Replication for Vehicular Edge Computing: A Combinatorial Multi-Armed Bandit based Approach}
	
	\author{\IEEEauthorblockN{Yuxuan Sun$^*$, Jinhui Song$^*$, Sheng Zhou$^*$, Xueying Guo$^\dagger$, Zhisheng Niu$^*$}
		\IEEEauthorblockA{$^*$Beijing National Research Center for Information Science and Technology\\
			Department of Electronic Engineering, Tsinghua University, Beijing, China\\
            Email: \{sunyx15, sjh14\}@mails.tsinghua.edu.cn, \{sheng.zhou, niuzhs\}@tsinghua.edu.cn	\\
        $^\dagger$Department of Computer Science, University of California, Davis, CA, USA\\
        Email: guoxueying@outlook.com}}
        

	\maketitle

	\begin{abstract}
In vehicular edge computing (VEC) system, some vehicles with surplus computing resources can provide computation task offloading opportunities for other vehicles or pedestrians. However, vehicular network is highly dynamic, with fast varying channel states and computation loads. 
These dynamics are difficult to model or to predict, but they have major impact on the quality of service (QoS) of task offloading, including delay performance and service reliability.
Meanwhile, the computing resources in VEC are often redundant due to the high density of vehicles.
To improve the QoS of VEC and exploit the abundant computing resources on vehicles, we propose a learning-based task replication algorithm (LTRA) based on combinatorial multi-armed bandit (CMAB) theory, in order to minimize the average offloading delay.
LTRA enables multiple vehicles to process the replicas of the same task simultaneously, and vehicles that require computing services can learn the delay performance of other vehicles while offloading tasks.
We take the occurrence time of vehicles into consideration, and redesign the utility function of existing CMAB algorithm, so that LTRA can adapt to the time varying network topology of VEC.
We use a realistic highway scenario to evaluate the delay performance and service reliability of LTRA through simulations, and show that compared with single task offloading, LTRA can improve the task completion ratio with deadline $0.6\mathrm{s}$ from $80\%$ to $98\%$.
	\end{abstract}
	

	%
	\IEEEpeerreviewmaketitle
	\section{Introduction}
	To support autonomous driving and various kinds of on-board infotainment services, future vehicles will possess strong computing capabilities. It is predicted that each vehicle needs about $10^6$ dhrystone million instructions executed per second (DMIPS) \cite{intel} computing power to enable self-driving.
	To deliver safety messages or disseminate infotainment contents, vehicles also need to communicate with other vehicles or infrastructures through vehicle-to-vehicle (V2V) and vehicle-to-infrastructure (V2I) communication protocols \cite{zhangjsac}, such as dedicated short-range communication (DSRC) \cite{kenney2011dsrc} protocol and LTE-V \cite{ltev}. 
	Consequently, vehicles will be connected with each other and abundant in computing resources in the future.
	
	To improve the utilization of vehicular computing resources, the concept of vehicle-as-an-infrastructure has been proposed \cite{hou2016vfc}, where vehicles can contribute their surplus computing resources to the network, forming Vehicular Edge Computing (VEC) systems \cite{abdel2015vehicle,bitam2015vanet,choo2017sdvc}.	
	VEC has huge potential to enhance edge intelligence, and can enable a variety of emerging applications that require intensive computing.
	Typical use cases include safety-related cooperative collision avoidance and collective environment perception for autonomous driving\cite{5gv2x,zhangshan}, vehicular crowdsensing for road monitoring and parking navigation \cite{vehcrowd}, and entertainments such as virtual reality, augmented reality and cloud gaming for passengers \cite{mao2017mobile}.	
	
	In VEC, computation tasks are generated by on-board driving systems, passengers or pedestrians, and can possibly be executed by vehicles through \emph{task offloading}. In this context, vehicles who provide cloud execution are called service vehicles (SeVs), while vehicles that require task offloading are called task vehicles (TaVs).
	In the literature, a semi-Markov decision process based formulation for centralized task offloading is given in \cite{zheng2015smdp}, in order to minimize the average utility related to delay and energy cost.
	However, centralized task scheduling requires to collect the complete state information of vehicles frequently, and the proposed algorithm is highly complex to run. 
	An alternative way is to offload tasks in a distributed manner, i.e., each TaV makes task offloading decisions individually \cite{fengave}. In this case, TaV may not be able to obtain the global state information of channel states and computation loads of all available SeVs, which can be learned while offloading tasks based on multi-armed bandit (MAB) theory, as shown in our previous work \cite{Sun2018ICC}.
	

	Compared with mobile edge computing (MEC) \cite{mao2017mobile}, in which computing resources are deployed at static base stations, VEC has two major  differences.
	On the one hand, vehicles move fast, making the network topology and wireless channels vary rapidly over time. 
	On the other hand, the density of SeVs is much higher than static edge clouds, and thus the computing resources of VEC are more redundant than MEC.
	

	To further improve the delay performance and service reliability in VEC system, while exploiting the redundancy of computing resources, \emph{task replication} is a promising method, in which task replicas are offloaded to multiple SeVs at the same time and executed independently. Once one of these SeVs transmits back the result, the task is completed.
	The basic idea of task replication is to exchange the redundancy of computing resources for QoS improvement.
	A centralized task replication algorithm is proposed in \cite{Jiang2017IoT}, in order to maximize the probability of completing a task before a given deadline. However, the optimal task assignment policy is derived under the assumption that the arrival of SeV follows Poisson process, which may not be the realistic vehicle mobility model.
	
	In this paper, we propose a learning-based task replication algorithm (LTRA) based on combinatorial multi-armed bandit (CMAB) theory \cite{chen2013cmab}. 
	To be specific, we first propose a distributed task replication framework, in order to minimize the average offloading delay. 
	Based on CMAB theory, we then design LTRA to deal with the challenge that TaV lacks global state information of channel states and computation loads of candidate SeVs, and characterize the upper bound of its learning regret.
	We simulate the traffic in a realistic highway scenario via traffic simulator Simulation for Urban MObility (SUMO), and compare LTRA with our previously proposed single offloading algorithm in \cite{Sun2018ICC}. Results show that both the average offloading delay and service reliability can be improved substantially through task replication.
	
%
%

	The rest of this paper is organized as follows.
	In Section \ref{sys}, we present the system model and problem formulation.
	The task replication algorithm is then proposed in Section \ref{algo}, followed by the performance analysis in Section \ref{per}.
	Simulation results are shown in Section \ref{sim}.
	And finally, we conclude the work in Section \ref{con}.

	\section{System Model and Problem Formulation} \label{sys}
	
	\subsection{System Overview}

	In the VEC system, moving vehicles are classified into two categories according to their roles in task offloading: TaVs and SeVs. TaVs are the vehicles who generate computation tasks that require cloud execution, while SeVs can share their surplus computing resources and execute these computation tasks.  
	Note that each vehicle may be either a TaV or a SeV, and its role can change over time, depending on whether it has surplus computing resources to share.
	
	Each TaV first discovers the SeVs within its communication range for task offloading. In order to maintain a relatively long contact duration, each TaV only selects its neighboring SeVs with the same moving direction as candidates. Such information can be acquired from V2V communication protocols such as beaconing messages in DSRC \cite{kenney2011dsrc}.
	Moreover, we do not make any assumptions on the mobility model of vehicles.
	
	We adopt \emph{task replication} technique to improve the reliability, and thus the delay of task offloading. 
	Besides, we consider distributed offloading in this work: each TaV makes the task offloading decision on which SeVs should be selected to serve each task independently, without inter-vehicle coordinations.
	An exemplary task replication in VEC system is shown in Fig. \ref{system}, where TaV 1 finds SeVs 1-3 as candidates, and decides to offload the current task replicas to SeV 1 and SeV 3.

	\begin{figure}  [!t]
		\centering
		\includegraphics[width=0.5\textwidth]{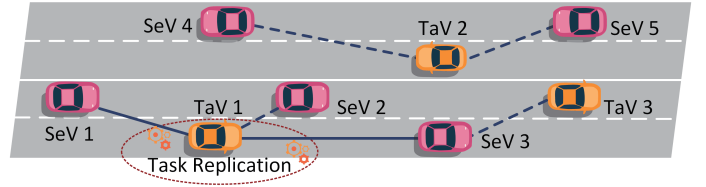}\\
		\caption{Task replication in VEC system.}\label{system}
	\end{figure}

    \subsection{Task Offloading Procedure}
    Since tasks are offloaded in a distributed manner, we then focus on a single TaV of interest and design the corresponding task offloading algorithm.
    Consider a discrete-time system with a total number of $T$ time periods.     
    The candidate SeV set at time period $t$ is denoted as $\mathcal{N}_t$, which may change over time. Assume that the density of SeV is high enough such that $\mathcal{N}_t \neq \emptyset$ for $\forall t$, otherwise the TaV may seek help from cloud servers at RSUs or in the Internet.
    The computation task that requires to be offloaded at time period $t$ is modeled by three parameters $(x_t,y_t,w_t)$ according to \cite{mao2017mobile}: $x_t$ is the input data size (in bits) to be transmitted from TaV to SeV, $y_t$ is the output data size (in bits) which is the computation result transmitted back from SeV to TaV, and  $w_t$ is the computation intensity (in CPU cycles per bit) representing how many CPU cycles are required to process one bit input data. The total required CPU cycles to execute the task is then given by $x_tw_t$.
    
    There are three offloading steps:
    
    \textbf{Task upload:} 
    During time period $t$, the TaV selects a subset $\mathcal{S}_t$ of candidate SeVs with $\mathcal{S}_t \subset \mathcal{N}_t$, and offloads the task replica to them simultaneously. We assume that the number of selected SeVs is fixed as $K$, i.e., $|\mathcal{S}_t|=K\leq |\mathcal{N}_t|$. For each SeV $n \in \mathcal{N}_t$, denote the uplink wireless channel state as $h^{(u)}_{t,n}$, and the interference power $I^{(u)}_{t,n}$. The channel bandwidth and transmission power are fixed as $W$ and $P$, and the noise power is $\sigma^2$. Then the uplink transmission rate $r^{(u)}_{t,n}$ between the TaV and SeV $n$ can be written as
     \begin{align} \label{uplink_rate}
    r^{(u)}_{t,n} = W\log_2\left(1 + \frac{Ph^{(u)}_{t,n}}{\sigma^2 + I^{(u)}_{t,n}}\right).
    \end{align}
    Thus the delay of uploading the task from TaV to SeV $n$ is      
   \begin{align}  
   d_t^{(u)}(t,n) = \frac{x_t }{r^{(u)}_{t,n}}.
   \end{align}
     
    \textbf{Task execution:} 
    After receiving the input data from TaV, each SeV $n \in \mathcal{S}_t$ executes the task independently. Denote the maximum CPU frequency of SeV $n$ as $C_n$ (in CPU cycles per second). Each SeV may process multiple computation tasks at the same time, either from its own user equipments or other TaVs, and the allocated CPU frequency for the TaV of interest is $f_{t,n}\in [0,C_n]$. Then the computation delay of SeV $n$ in time period $t$ is 
    \begin{align}
    d_c(t,n) = \frac{x_t w_t}{f_{t,n}}.
    \end{align}
    
    \textbf{Result feedback:} 
    The computation result is finally transmitted back from each selected SeV $n \in \mathcal{S}_t$ to the TaV. Similar to \eqref{uplink_rate}, the downlink transmission rate between SeV $n$ and TaV in time period $t$ is given by 
     \begin{align} \label{downlink_rate}
    r^{(d)}_{t,n} = W\log_2\left(1 + \frac{Ph^{(d)}_{t,n}}{\sigma^2 + I^{(d)}_{t,n}}\right),
    \end{align}
    where $h^{(d)}_{t,n}$ and $I^{(d)}_{t}$ are the downlink channel state and the interference at the TaV respectively.   
    Therefore the downlink delay from SeV $n$ to TaV is      
    \begin{align}  
    d_t^{(d)}(t,n) = \frac{y_t }{r^{(d)}_{t,n}}.
    \end{align}
    
    As a result, the offloading delay $d(t,n)$ of SeV $n$ is the sum of uplink and downlink transmission delay and the computation delay, written as
    \begin{align}  
    d(t,n) = d_t^{(u)}(t,n)+d_c(t,n) +d_t^{(d)}(t,n).
    \end{align}
        
    The actual offloading delay of each task that the TaV experiences only depends on 
    \begin{align}
    	\min_{n\in \mathcal{S}_t}d(t,n).
    \end{align}
    However, we still require \emph{all} the other SeVs to finish execution and transmit the result, in order to record the offloading delay for the learning purposes, which will be introduced in detail in Section \ref{algo}.

    \subsection{Problem Formulation} \label{pro}
	Given a total number of $T$ time periods, our objective is to minimize the average offloading delay of tasks, by deciding which subset of SeVs should be selected to serve each task. The problem is formulated as:
	 \begin{align} \label{obj}
	\textbf{P1:}~\min_{\mathcal{S}_1,...,\mathcal{S}_T} \frac{1}{T}\sum_{t=1}^{T}\min_{n \in \mathcal{S}_t}d(t,n).  
	\end{align}
	
	The delay performance of each SeV mainly depends on three variables: uplink transmission rate $r^{(u)}_{t,n}$, allocated CPU frequency $f_{t,n}$, and downlink transmission rate $r^{(d)}_{t,n}$. If these variables are known to the TaV before it offloads each task, the TaV can then calculate the exact offloading delay $d(t,n)$ of each SeV $n$, and select single SeV $\arg\min_{n \in \mathcal{N}_t}d(t,n)$.
	However, due to the movements of vehicles, the transmission rates $r^{(u)}_{t,n}$ and $r^{(d)}_{t,n}$ are fast varying and hard to predict.
	Meanwhile, the allocated CPU frequency $f_{t,n}$ is not easy to know in prior due to the varying computation loads of SeVs.
	Thus TaV may lack the exact global state information, and can not distinguish which SeV provides the fastest computation for each task.
	
	Our solution is \emph{learning while offloading}: we let the TaV learn the delay performance of candidate SeVs through delay observations while offloading tasks. To be specific, till time period $t$, the TaV gets delay records $\{d(1,n),n \in \mathcal{S}_1\}$, ..., $\{d(t-1,n),n \in \mathcal{S}_{t-1}\}$, estimates the delay performance at the current time period, and selects a subset  $\mathcal{S}_t=\arg\min_{\mathcal{S} \in \mathcal{N}_t}\min_{n \in \mathcal{S}}d(t,n)$ to offload the task replica.
	

	\section{Learning while Offloading}\label{algo}
	
	In this section, we design learning-based task replication algorithm, which guides the TaV to learn the delay performance of candidate SeVs while offloading tasks, in order to minimize the average offloading delay.
	We consider a simplified scenario by assuming that tasks are of equal input, output data size $x_t=x_0$, $y_t=y_0$ and computation intensity $w_t=w_0$ for $\forall t$.
	In fact, tasks often have similar input and output data size ratio and computation intensity if they are generated by the same kind of applications. And tasks with diverse input data size can be partitioned into several subtasks of the same input data size and offloaded in sequence, e.g., a long video frame for object detection or classification can be divided into short video clips through video segmentation \cite{cvpr2010}.
	
	The task replication is an online sequential decision making problem, which have been investigated under the MAB framework.
	In classical MAB problem \cite{auer2002finite}, there are a fixed number of base arms with unknown loss distributions. In each time period, a decision maker tries a candidate base arm, observes its loss, and update the estimates of its loss distribution. The objective is to minimize the cumulative loss over time. 
	The classical MAB problem has been further extended to the CMAB problem \cite{chen2013cmab}, where in each time period the decision maker can try a subset of base arms (defined as a super arm), observe the loss of all the base arms composing this super arm, and minimize the cumulative loss of the system. 
	
	Our problem is similar to the CMAB problem with non-linear loss function: each candidate SeV corresponds to a base arm with unknown delay distribution, while the TaV is the decision maker who selects a subset $\mathcal{S}_t$ of SeVs in each time period $t$. 
	The TaV can observe the delay $ d(t,n)$ (loss) of all selected SeVs, and the system loss, i.e., the offloading delay, is the minimum of the observed delay $\min_{n \in \mathcal{S}_t}  d(t,n)$, which is a non-linear function.
	
	The major difference between our task replication problem and the existing CMAB problem is that, the candidate SeV set $\mathcal{N}_t$ may change over time since vehicles are moving, and it is difficult to predict when SeVs may appear or disappear, and how long they can act as candidates. 	
	How to efficiently learn the delay performance of candidate SeVs under such a dynamic environment has not been investigated in the existing work of CMAB.
	
	We thus take into consideration the time varying feature of candidate SeVs, and revise the existing CMAB algorithm in \cite{chen2016cmab} into learning-based task replication algorithm (LTRA), as shown in Algorithm 1.
	Let $\tilde{d}(t,n)=\frac{d(t,n)}{d_{max}}$ be the normalized delay, where $d(t,n), \forall n\in \mathcal{S}_t$ is the delay observed by TaV, and $d_{max}$ is the maximum delay allowed of each task offloading. If in time period $t$, the computation result from SeV $n$ is not successfully received by the TaV till $d_{max}$, we regard that the task is failed by SeV $n$, and set the observed delay $d(t,n)=d_{max}$ for learning purpose. And thus $\tilde{d}(t,n) \in [0,1]$.
	Denote $\hat D_n$ as the empirical distribution of the normalized delay $\tilde{d}(t,n)$ of SeV $n$, and $\hat F_n$ the cumulative distribution function (CDF) of $\hat D_n$.
	Let $k_{t,n}$ be the number of tasks offloaded to SeV $n$ so far, $\beta$ a constant factor, and $t_n$ the occurrence time of SeV $n$.


	\begin{algorithm}
		\caption{Learning-based Task Replication Algorithm}
		\begin{algorithmic}[1]
			\For {$t=1,...,T$}
			\If { Any SeV $n \in \mathcal{N}_t$ has not connected to TaV}
			\State Connect to any subset $\mathcal{S}_t \in \mathcal{N}_t$ once, with $n\in \mathcal{S}_t$.
			\State Update empirical CDF $\hat F_n$ of normalized delay $\tilde{d}(t,n)$ and selected times $k_{t,n}$ for each $n\in \mathcal{S}_t$. 
			\Else
			\State For each $n\in\mathcal{N}_t$, define CDF $\underline{G}_n$ as
			\begin{align} \label{utility}
			\underline{G}_n(x)=   \begin{cases} 0&x=0, \\
			\min \left\{ \hat{F}_n(x)+\sqrt{\frac{\beta\ln (t-t_n)}{k_{t-1,n}}},1 \right\}&0<x\leq1\end{cases} .          
			\end{align}		
			\State Select a subset of candidate SeVs, such that
			\begin{align} \label{bestset}
				\mathcal{S}_t= \argmin_{\mathcal{S} \subset \mathcal{N}_t } \mathbb{E}_{\underline{D}}\left[\min_{n \in \mathcal{S}}d(t,n)\right], 
			\end{align}
			where $\underline{D}_n$ is the distribution of CDF $\underline{G}_n$, and $\underline{D}=\underline{D}_1 \times \underline{D}_2 \times ... \times \underline{D}_{|\mathcal{N}_t|} $. 
			\State Offload the task replica to all the SeV $\forall n \in \mathcal{S}_t$.
			\State Observe delay $d(t,n)$ for each $n\in\mathcal{S}_t$.
			\State Update $\hat F_n$ and $k_{t,n}\leftarrow k_{t-1,n}+1$ for each $n\in \mathcal{S}_t$.
			\EndIf
			\EndFor		
		\end{algorithmic}
	\end{algorithm}

	In Algorithm 1, Lines 2-4 are the initialization phase, during which the TaV selects a subset of $K$ SeVs that contains at least one newly appeared SeV. Note that the initialization phase not only happens at the beginning of task offloading, but whenever new SeVs occur. 
	
	Lines 6-10 are the main learning phase. 		
	Due to the non-linearity of the offloading delay $\min_{n \in \mathcal{S}_t}  d(t,n)$, the offloading decision $\mathcal{S}_t$ depends on the \emph{entire delay distribution} of each candidate SeVs, rather than their means. Thus the learning algorithm keeps updating the empirical CDF $\hat F_n$ to learn the entire distribution, and makes offloading decisions according to $\hat F_n$.	
	In Line 6, the CDF $\underline{G}_n(x)$ defined in \eqref{utility} is a numerical upper confidence bound on the real delay CDF of each SeV $n$, which can balance the exploration-exploitation tradeoff during the learning process:
	The TaV tends to \emph{explore} SeVs with fewer selected times $k_{t,n}$ to learn good estimates of their delay distributions, while at the same time to \emph{exploit} SeVs with better empirical delay performance to optimize the instantaneous offloading delay.
	The padding term $\sqrt{\frac{\beta\ln (t-t_n)}{k_{t-1,n}}}$ also considers the occurrence time $t_n$ of each SeV $n$, such that the newly appeared SeVs can be well explored, while the empirical information of the existing SeVs can be exploited.
	
	In Line 7, the TaV selects a subset of candidate SeVs that minimizes the expectation of offloading delay according to \eqref{bestset}, where $\underline{D}_n$ is the distribution of CDF $\underline{G}_n$, and $\underline{D}=\underline{D}_1 \times \underline{D}_2 \times ... \times \underline{D}_{|\mathcal{N}_t|} $ is the joint distribution of all candidate SeVs. Calculating $\mathcal{S}_t$ is actually a minimum element problem, which can be solved by greedy algorithms \cite{goel2010how}. Then the TaV offloads the task replica to all the selected SeVs $n \in \mathcal{S}_t$, waits for their feedbacks to observe the delay, and finally updates the empirical CDF $\hat F_n$ of normalized delay $\tilde{d}(t,n)$ and selected times $k_{t,n}$.
	
	\subsection{Implementation Considerations}
	
	Since the offloading delay is continuous, LTRA may suffer from large storage usage and computational complexity as $t$ grows. To be specific, the observed delay values $d(t,n)$ of each SeV $n$ might be different in each time period, and thus the required storage for each empirical CDF $\hat F_n$ is $O(t)$. Meanwhile, it takes $O(t)$ time to calculate the numerical upper confidence bound $\underline{G}_n(x)$, and the minimum element problem in \eqref{bestset} is more complex to solve. 
	To reduce the storage usage and computational complexity of the algorithm, we can discretize the empirical CDF $\hat F_n$ to be $\tilde F_n$, by partitioning the range $[0,1]$ into $l$ segments with equal interval $\frac{1}{l}$. 
	The support of the discretized CDF $\tilde F_n$ is $\{\frac{1}{l},\frac{2}{l},...,1\}$, and if the normalized delay $\tilde{d}(t,n)$ belongs to $\left(\frac{j-1}{l},\frac{j}{l}\right]$, the delay used to update $\tilde F_n$ is $\frac{j}{l}$.
	
	\section{Algorithm Performance}\label{per}
	In this section, we characterize the performance of the proposed LTRA.
	To carry out theoretical analysis, we assume that the candidate SeV set $\mathcal{N}_t$ remains constant as $\mathcal{N}$ for the total $T$ time periods, and the delay $d(t,n)$ of each candidate SeV $n$ is independent from other SeVs and i.i.d across time.
	We will prove later through simulations in Section \ref{sim} that without the aforementioned two assumptions, LTRA can still work well.
	
	Let $\d_t=(d(t,1),...,d(t,N))$ be the delay vector of all candidate SeVs in time period $t$ with $N= |\mathcal{N}|$, and $L(\d_t, \mathcal{S}_t)=\min_{n\in \mathcal{S}_t}d(t,n)$ the loss function.
	The expected loss $\mu_{\mathcal{S}_t}$ for choosing a subset $\mathcal{S}_t$ of candidate SeVs does not change over time due to the i.i.d assumption, thus we omit the subscript $t$ and let $\mu_\mathcal{S}=\mathbb{E}[L(\d, \mathcal{S})]$.
	Moreover, let  $\mathcal{S}^*=\argmin_{\mathcal{S} \subset \mathcal{N} }\mu_\mathcal{S}$ be the optimal subset of SeV, and $\mu_{\mathcal{S}^*}=\min_{\mathcal{S} \subset \mathcal{N} }\mu_\mathcal{S}$ its expected loss.

	Define the cumulative learning regret $R_T$ as
	\begin{align}
		R_T=\mathbb{E}\left[   \sum_{t=1}^{T}L(\d_t, \mathcal{S}_t)   \right]   -T\mu_{\mathcal{S}^*},
	\end{align}
	which is the expected loss brought with learning as compared to the optimal decisions, since the TaV does not know which candidate SeV performs the best.
	
	For any suboptimal subset $\mathcal{S}$, let the expected delay gap $\Delta_\mathcal{S}=(\mu_\mathcal{S}-\mu_{\mathcal{S}^*})/d_{max}$. Define
	\begin{align}
	\Delta_n=\min\{\Delta_\mathcal{S}| \mathcal{S}\subset \mathcal{N}, n\in  \mathcal{S},  \mu_\mathcal{S}>\mu_{\mathcal{S}^*}    \},
	\end{align}
	and let $\mathcal{N}_s$ be the set of candidate SeVs which is contained in at least one suboptimal subset.
	
	In the following theorem, we provide an upper bound of the cumulative learning regret of LTRA.
	
	\begin{theorem} \label{theo}
		Let $\beta=\frac{2}{3}$, then $R_T$ is upper bounded by:
		\begin{align} \label{regret_bound}
		R_{T}\leq d_{max}\left(C_1 K \sum_{n\in\mathcal{N}_s}\frac{\ln T}{\Delta_n}+C_2 \right),
		\end{align}
		where $C_1=2136$ and $C_2=\left(\frac{\pi^2}{3}+1\right)N$ are two constants.
	\end{theorem}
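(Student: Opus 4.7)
The plan is to cast Theorem~\ref{theo} into the stochastic combinatorial MAB (SDCB) framework of Chen et al.\ 2016, of which LTRA is a direct specialization. Under the theorem's assumptions $\mathcal{N}_t \equiv \mathcal{N}$ and i.i.d.\ per-arm delays, the occurrence times $t_n$ are common, so the padding term in \eqref{utility} collapses to the standard $\sqrt{(2/3)\ln t / k_{t-1,n}}$ and the CUCB analysis template for non-linear objectives can be transplanted with the $\min$ loss in place of their reward. Concretely, I would (i) establish a bounded-smoothness property of the loss with respect to the Kolmogorov distance between marginal CDFs, (ii) control the empirical CDF deviation by a uniform concentration inequality, and (iii) run a UCB-style suboptimal-pull counting argument.

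For step (i), using the identity $\mathbb{E}[\min_{n\in\mathcal{S}}d(n)] = \int_0^{d_{max}} \prod_{n\in\mathcal{S}}(1-F_n(x))\,dx$ for non-negative bounded variables and a one-coordinate-at-a-time telescoping, I would show that for any two product distributions $D, D'$ with marginal CDFs $F_n, F'_n$,
\begin{align*}
\Bigl|\mathbb{E}_D[\min_{n\in\mathcal{S}} d(n)] - \mathbb{E}_{D'}[\min_{n\in\mathcal{S}} d(n)]\Bigr| \le d_{max} \sum_{n\in\mathcal{S}} \sup_x |F_n(x)-F'_n(x)|.
\end{align*}
This yields a $K d_{max}$-Lipschitz dependence of the objective on the Kolmogorov distances of the marginals, which is exactly the structural ingredient required by the SDCB regret analysis. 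For step (ii), I would invoke the Dvoretzky-Kiefer-Wolfowitz inequality to bound $\mathbb{P}[\sup_x |\hat F_n(x) - F_n(x)| > \sqrt{(2/3)\ln t / k_{t-1,n}}] \le 2 t^{-4/3}$; summing across rounds and across $n\in\mathcal{N}$ gives a total bad-event mass of $O(N)$ that, together with the $N$ rounds consumed in the initialization phase (Lines 2-4 of Algorithm~1) and a $\pi^2/6$-style tail, produces the additive constant $C_2 = (\pi^2/3+1)N$.

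For step (iii), on the good event that $\underline{G}_n$ stochastically dominates $F_n$ for every $n$, the surrogate objective $\mathbb{E}_{\underline D}[\min_{n\in\mathcal{S}} d(n)]$ uniformly underestimates $\mu_\mathcal{S}$, while by the smoothness bound it deviates from $\mu_{\mathcal{S}_t}$ on the chosen set by at most $d_{max}\sum_{n\in\mathcal{S}_t}\sqrt{(2/3)\ln t / k_{t-1,n}}$. Hence whenever a suboptimal $\mathcal{S}_t$ is selected, some $n\in\mathcal{S}_t$ must have been pulled fewer than $O(\ln t/\Delta_{\mathcal{S}_t}^2)$ times. Charging each such round to this undersampled arm and summing against all suboptimal super arms containing $n$ via the dyadic gap-peeling argument of Chen et al.\ 2016 converts the naive $1/\Delta_\mathcal{S}^2$ into the sharper $1/\Delta_n$, giving $O(K\ln T/\Delta_n)$ suboptimal selections per $n\in\mathcal{N}_s$. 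Multiplying by $d_{max}$ and tracking the explicit constants through the smoothness coefficient $K d_{max}$ and the choice $\beta=2/3$ delivers the bound with $C_1 = 2136$.

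The main obstacle is precisely this last step: obtaining the $1/\Delta_n$ rather than a $1/\Delta_n^2$ dependence requires the careful gap-peeling accounting of Chen et al.\ 2016 adapted to the non-linear $\min$ loss, and the specific constant $C_1 = 2136$ is what falls out of tracking all the Lipschitz, concentration, and peeling factors through that accounting. A secondary subtlety is that the algorithm's padding uses $\ln(t-t_n)$ rather than $\ln t$; under the constant-set hypothesis of Theorem~\ref{theo} this distinction collapses, but relaxing it would require a sleeping-bandit variant of the DKW concentration, which this analysis does not attempt.
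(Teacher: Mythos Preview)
Your proposal is sound in substance but takes a more laborious route than the paper. The paper's proof is a pure reduction: it transforms the delay-minimization objective into a reward-maximization one via $\min_{n\in\mathcal{S}_t}\tilde d(t,n)\mapsto \max_{n\in\mathcal{S}_t}(1-\tilde d(t,n))$, verifies that the resulting reward function is bounded in $[0,1]$ and monotone (so Assumptions~2 and~3 of Chen et al.\ 2016 hold with $M=1$), and then shows that under the constant-$\mathcal{N}$ hypothesis the CDF $\underline{G}_n$ of LTRA is exactly the complement of the lower-confidence CDF $\underline{F}_n$ in their SDCB algorithm. Once this equivalence is established, Theorem~1 of Chen et al.\ is quoted directly with $\alpha=1$, $M=1$, which yields $C_1=2136$ and $C_2=(\pi^2/3+1)N$ without any further work.

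By contrast, you propose to re-derive the SDCB regret bound from first principles on the $\min$-loss side: proving bounded smoothness via the survival-function integral identity, applying DKW for the Kolmogorov deviation, and redoing the dyadic gap-peeling. All of that is correct and would yield the same constants, but it is unnecessary once the equivalence is observed; the paper's argument is essentially two lines plus a citation. Your approach buys self-containment and makes the origin of $C_1$ and $C_2$ transparent, whereas the paper's reduction is shorter and avoids re-proving anything. Your remark about the $\ln(t-t_n)$ versus $\ln t$ distinction collapsing under the constant-$\mathcal{N}$ assumption is exactly the point the paper relies on implicitly.
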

	\begin{proof}
		See Appendix \ref{proof1}.
	\end{proof}
	
	Theorem 1 shows that, our proposed LTRA can provide a delay performance with bounded regret, compared to the genie-aided case, where the delay distributions of candidate SeVs are known in prior. To be specific, the cumulative learning regret grows logarithmically with $T$, and is also related to the number of selected SeVs and the performance gap between different subsets of SeVs.

	\section{Simulations} \label{sim}
	In this section, we carry out simulations to evaluate the delay performance and service reliability of the proposed LTRA.
	We first use SUMO\footnote{http://www.sumo.dlr.de/userdoc/SUMO.html} to simulate the traffic, and then import the floating car data generated by SUMO into MATLAB to evaluate the performance of LTRA. 
	
	The road used for traffic simulation is a $12\mathrm{km}$ segment of G6 Highway in Beijing, with two lanes and two ramps,  downloaded from Open Street Map (OSM)\footnote{http://www.openstreetmap.org/}. Vehicles come from either the start of the road or the ramps, and when a vehicle meets a ramp, it leaves the highway with a probability of $0.5$. The maximum speed allowed of both SeVs and TaVs is $20\mathrm{m/s}$. The arrival rate of SeV ranges from $0.05\mathrm{s^{-1}}$ to $0.4\mathrm{s^{-1}}$, and the arrival rate of TaV ranges from $0.01\mathrm{s^{-1}}$ to $0.2\mathrm{s^{-1}}$. 
	
	The floating car data of SUMO includes the type, ID, position, speed and angle of each vehicle, so that we can calculate the distance of each SeV and TaV in MATLAB. The communication range is set to $300\mathrm{m}$, and the wireless channel $h^{(u)}_{t,n}=h^{(d)}_{t,n}=A_0d_{st}^{-2}$, with $A_0=-17.8\mathrm{dB}$ and $d_{st}$ the distance between TaV and SeV, according to \cite{abdulla2016vehicle}. 
	The channel bandwidth $W=10\mathrm{MHz}$, transmission power $P=0.1\mathrm{W}$, and noise power $\sigma^2=10^{-13}\mathrm{W}$.
	
	For each task, we set the input data size $x_0=1\mathrm{Mbits}$, its computation intensity $\omega_0=1000 \mathrm{Cycles/bit}$, and the output data size is very small and omitted.
	The maximum CPU frequency $C_n$ of each SeV is uniformly chosen within $[2,8]\mathrm{GHz}$. 
	In each time period, the allocated CPU frequency for TaVs is randomly distributed from $0$ to $60\%C_n$ (each SeV also needs to process tasks from its own driving system or UEs, so it can not allocate all the computing resources for TaVs).
	Note that each SeV may provide service for multiple TaVs at the same time, and in the simulation, tasks offloaded by TaVs are served by the first-come-first-serve queue discipline. 
	Moreover, parameter $\beta$ in \eqref{utility} is $0.6$, and the default number of discretization segments is $50$.

	We compare our proposed LTRA with 3 other algorithms: 
	1) \textbf{Genie-aided Policy}: assume that the TaV knows the global state information of all candidate SeVs, and always selects single SeV with minimum delay. This policy can not be realized in the realistic VEC system.
	2) \textbf{Random Policy}: a naive policy, in which TaV randomly selects a SeV in each time period to offload the task.
	3) \textbf{Single Offloading}: a learning-based task offloading policy proposed in our previous work \cite{Sun2018ICC}, in which each TaV only selects a \emph{single} SeV for each task.

	\begin{figure}[!t]
		\centering	
		\includegraphics[width=0.45\textwidth]{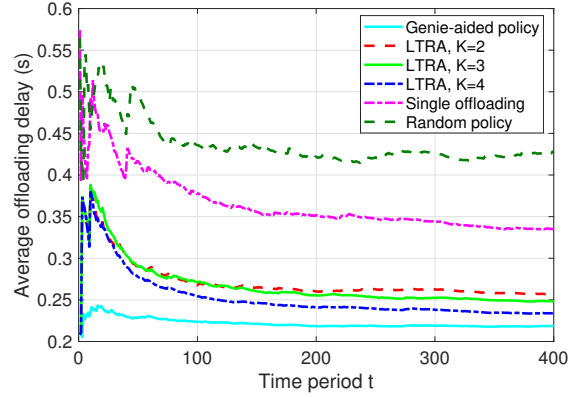}	
		\caption{Average offloading delay of LTRA.}
		\label{fig_delay}
	\end{figure}

	Fig. \ref{fig_delay} shows the average offloading delay of LTRA with the number of task replicas $K=2,3,4$. 
	On average, there are about $8$ candidate SeVs and $1$ other TaV around our target TaV.
	It is shown that with task replication, the delay performance is improved to about $0.25\mathrm{s}$, while learning-based single offloading algorithm can only achieve $0.33\mathrm{s}$. With increasing number of task replicas, the average offloading delay is closer to the genie-aided policy.

	\begin{figure}[!t]
	\centering	
	\subfigure[Average offloading delay.]{\label{fig_sev_delay}			
		\includegraphics[width=0.45\textwidth]{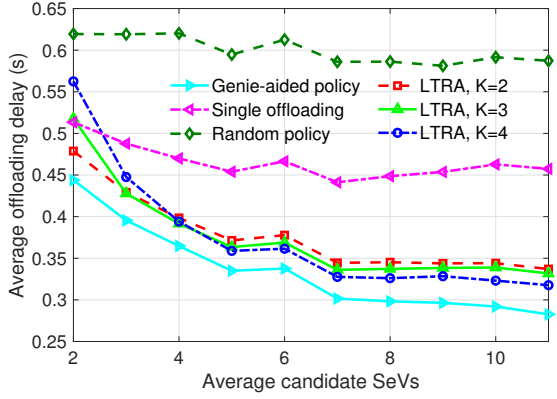}}		
	\subfigure[Task completion ratio.]{\label{fig_sev_ratio}	
		\includegraphics[width=0.45\textwidth]{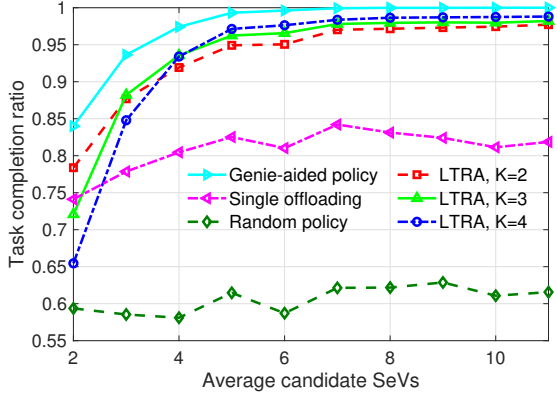}}
	\caption{Performance of LTRA under different number of candidate SeVs.}
	\label{fig_sev}
	\end{figure}

	Fig. \ref{fig_sev} shows the delay performance and service reliability under different SeV densities. 
	The x-coordinate, ranges from $2$ to $11$, is the average number of candidate SeVs around our target TaV, and still there is another $1$ TaV around.
	In Fig. \ref{fig_sev_delay}, the average offloading delay of LTRA decreases along with the increasement of candidate SeVs, since LTRA can exploit the redundant computing resources through  task replication.
	However, more task replicas do not always bring performance improvement.
	When computing resources are insufficient, too many task replicas may lead to long task queues at candidate SeVs, which is not efficient.
	Fig. \ref{fig_sev_ratio} shows the task completion ratio given deadline $0.6\mathrm{s}$.
	When there are more than $3$ candidate SeVs, the task completion ratio of LTRA outperforms single offloading.
	And when there are more than $7$ candidate SeVs, the service reliability of LTRA can reach over $98\%$, while single offloading only achieves about $80\%$. 
	Therefore, with sufficient computing resources in the VEC system, task replication is a promising method to enhance the reliability of computing services .
 
	\begin{figure}[!t]
		\centering	
		\includegraphics[width=0.45\textwidth]{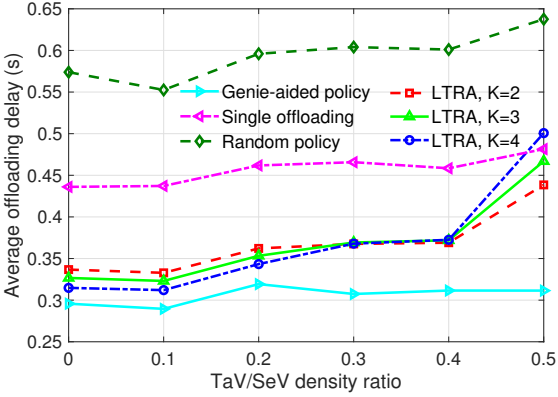}	
		\caption{Average offloading delay of LTRA under different TaV and SeV density ratio.}
		\label{fig_tav}
	\end{figure}

	In Fig. \ref{fig_tav}, the density of SeV is fixed, with about $8$ candidates around the target TaV.
	As more TaVs competing the computing resources with each other, the average offloading delay of LTRA increases. 
    To be specific, when the density ratio of TaV and SeV is below $0.3$, LTRA with $4$ task replicas outperforms LTRA with only $2$ replicas. However, as the density ratio grows higher, fewer number of replicas achieves better delay performance.
    Thus the number of task replica should be carefully selected under different traffic conditions.

		\begin{figure}[!t]
		\centering	
		\subfigure[Average offloading delay.]{\label{fig_step_delay}			
			\includegraphics[width=0.43\textwidth]{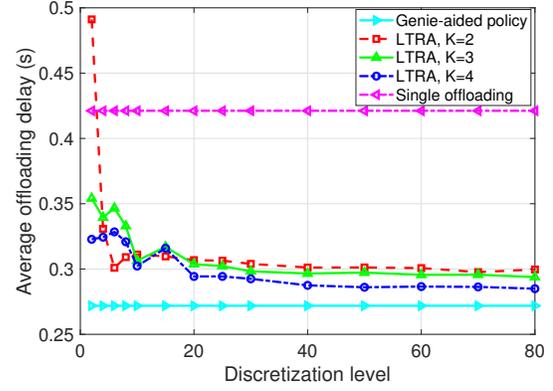}}		
		\subfigure[Runtime of LTRA.]{\label{fig_step_runtime}	
			\includegraphics[width=0.43\textwidth]{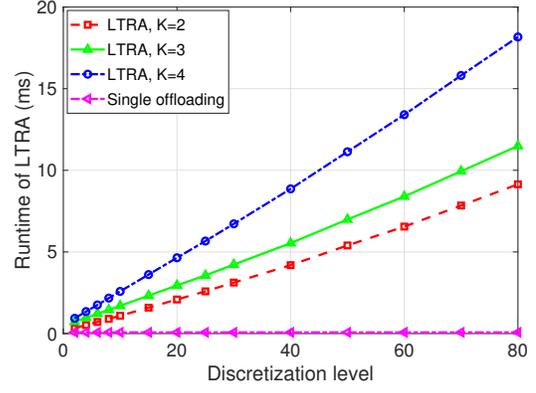}}
		\caption{Impact of discretization level $l$.}
		\label{fig_step}
	\end{figure}


	Finally, we explore the impact of discretization level $l$ on the average offloading delay and the runtime of the algorithm.
	When the delay distribution is discretized into very few segments, the runtime of LTRA is low, but the average offloading delay is very poor and fluctuates severely. When the discretization level is too high, the delay performance does not improve much, but it takes more time to run LTRA.
	To get good estimates of the realistic delay distributions,  while saving the runtime of LTRA at the same time, the discretization level should be carefully selected. For example, under our settings, the discretization level $l$ should be about $40$ to $50$.

	\section{Conclusions} \label{con}
	In this work, we have investigated the task offloading problem in VEC system, and proposed LTRA by combining the task replication and sequential learning techniques, in order to minimize the average offloading delay.
	LTRA enables each TaV to learn the delay performance of candidate SeVs while offloading tasks, and can adapt to the highly dynamic vehicular environment.
	We have carried out simulations under a realistic highway scenario, and compared the delay performance and service reliability of LTRA to the existing single offloading algorithm.
	Simulation results have shown that the average delay of LTRA is close to the optimal genie-aided policy and better than the single offloading policy. And when there are sufficient SeVs, the performance can be highly improved through a small number of task replications.
	Specifically, with a given deadline 0.6s, the task completion ratio of LTRA can reach $98\%$ with only two replicas, while single offloading can only achieve about $80\%$.

\section*{Acknowledgment}
This work is sponsored in part by the Nature Science Foundation of China (No. 91638204, No. 61571265, No. 61621091), and Intel Collaborative Research Institute for Mobile Networking and Computing.

\appendices{}

\section{Proof of Theorem 1} \label{proof1}
We prove that, under the assumption that the number of candidate SeVs is fixed, our delay minimization problem is equivalent to the reward maximization problem of standard CMAB investigated in \cite{chen2016cmab}, and the proposed algorithm LTRA is equivalent to the stochastically dominant confidence bound (SDCB) algorithm proposed in \cite{chen2016cmab}.

First, the objective functions are equivalent, since
\begin{align} 
&\min_{\mathcal{S}_1,...,\mathcal{S}_T} \frac{1}{T}\sum_{t=1}^{T}\min_{n \in \mathcal{S}_t}d(t,n)\nonumber\\
=&d_{max}\min_{\mathcal{S}_1,...,\mathcal{S}_T} \frac{1}{T}\sum_{t=1}^{T}\min_{n \in \mathcal{S}_t}\tilde{d}(t,n)\nonumber\\
\Leftrightarrow & \max_{\mathcal{S}_1,...,\mathcal{S}_T} \frac{1}{T}\sum_{t=1}^{T}\left[\max_{n \in \mathcal{S}_t}\left(1-\tilde{d}(t,n)\right)\right].
\end{align}
Since $\tilde{d}(t,n)\in [0,1]$, the reward function $R(\d_t,\mathcal{S}_t)=\max_{n \in \mathcal{S}_t}\left(1-\tilde{d}(t,n)\right)\in [0,1]$, satisfying assumption 2 in \cite{chen2016cmab} with upper bound $M=1$. Also, $R(\d_t,\mathcal{S}_t)$ is monotone, which satisfies assumption 3 in \cite{chen2016cmab}.

Second, the numerical upper confidence bound $\underline{G}_n(x)$ can be transformed to CDF $\underline{F}_n(x)$ defined in the SDCB algorithm in \cite{chen2016cmab}. Define $\hat{F}_n(x)$ as the CDF of $\tilde{d}(t,n)$, and $\hat{F}'_n(x)$ the CDF of $1-\tilde{d}(t,n)$.
It is easy to see that $\hat{F}(x)=1-\hat{F}'(1-x)$. Thus
\begin{align}
	&\underline{G}_n(x)=1-\underline{F}_n(1-x) \nonumber\\
	&=1- \begin{cases} \max \left\{ \hat{F}'_n(1-x)-\sqrt{\frac{\beta\ln t}{k_{t-1,n}}},0 \right\}&0\leq1-x<1, \\
	1&1-x=1\end{cases} \nonumber\\
	&=1- \begin{cases} \max \left\{ 1-\hat{F}_n(x)-\sqrt{\frac{\beta\ln t}{k_{t-1,n}}},0 \right\}&0<x\leq1, \\
	1&x=0\end{cases} \nonumber\\
	&= \begin{cases} 0&x=0, \\
	\min \left\{ \hat{F}_n(x)+\sqrt{\frac{\beta\ln t}{k_{t-1,n}}},1 \right\}&0<x\leq1\end{cases} .
\end{align}

By substituting the reward upper bound $M=1$, and let $\alpha=1$ in Theorem 1 in \cite{chen2016cmab}, \eqref{regret_bound} can be derived.

	%

\begin{thebibliography}{100}
\bibitem{intel}
Intel, ``Self-driving car technology and computing requirements," [Online] Available: 
https://www.intel.com/content/www/us/en/automotive/
driving-safety-advanced-driver-assistance-systems-self-driving-technol
ogy-paper.html

\bibitem{zhangjsac}
S. Zhang, W. Quan, J. Li, W. Shi, P. Yang, and X. Shen, ``Air-ground integrated vehicular network slicing with content pushing and caching,'' [Online]. Available:
\url{https://arxiv.org/abs/1806.03860}.

\bibitem{kenney2011dsrc}
J. B. Kenney, ``Dedicated short-range communications (DSRC) standards in the United States," \emph{Proceedings of the IEEE}, vol. 99, no. 7, pp. 1162-1182, Jul. 2011.

\bibitem{ltev}
S. Chen, J. Hu, Y. Shi, and L. Zhao, ``L\protect{TE-V}: A \protect{TD-LTE}-based \protect{V2X} solution for future vehicular network," \emph{IEEE Internet Things J.}, vol. 3, no. 6, pp. 997-1005, Dec. 2016.

\bibitem{hou2016vfc}
X. Hou, Y. Li, M. Chen, D. Wu, D. Jin, and S. Chen, ``Vehicular fog computing: A viewpoint of vehicles as the infrastructures," \emph{IEEE Trans. Veh. Technol.}, vol. 65, pp. 3860-3873, Jun. 2016.

\bibitem{abdel2015vehicle}
S. Abdelhamid, H. Hassanein, and G. Takahara, ``Vehicle as a resource (VaaR),'' \emph{IEEE Netw.}, vol. 29, no. 1, pp. 12-17, Feb. 2015.

\bibitem{bitam2015vanet}
S. Bitam, A. Mellouk, and S. Zeadally, ``VANET-cloud: A generic cloud computing model for vehicular ad hoc networks,'' \emph{IEEE Wireless Commun.}, vol. 22, no. 1, pp. 96-102, Feb. 2015.

\bibitem{choo2017sdvc}
J. S. Choo, M. Kim, S. Pack, and G. Dan, ``The software-defined vehicular cloud: A new level of sharing the road," \emph{IEEE Veh. Technol. Mag.}, vol. 12, no. 2, pp. 78-88, Jun. 2017.

\bibitem{5gv2x}
3GPP, ``Study on enhancement of 3GPP support for 5G V2X services," 
3GPP TR 22.886, V15.1.0, Mar. 2017, 

\bibitem{zhangshan}
S. Zhang, J. Chen, F. Lyu, N. Cheng, W. Shi, and X. Shen, ``Vehicular communication networks in automated driving era,'' [Online]. Available:
\url{https://arxiv.org/abs/1805.09583}.

\bibitem{vehcrowd}
J. Ni, A. Zhang, X. Lin and X. S. Shen, ``Security, Privacy, and Fairness in Fog-Based Vehicular Crowdsensing," \emph{IEEE Commun. Mag.}, vol. 55, no. 6, pp. 146-152, Jun. 2017.


\bibitem{mao2017mobile}
Y.~Mao, C.~You, J.~Zhang, K.~Huang, and K. B. Letaief, ``A survey on mobile edge computing: The communication perspective,'' \emph{IEEE Commun. Surveys Tut.}, vol. 19, no. 4, pp. 2322-2358, Fourthquarter 2017.

\bibitem{zheng2015smdp}
K. Zheng, H. Meng, P. Chatzimisios, L. Lei, and X. Shen, ``An \protect{SMDP}-based resource allocation in vehicular cloud computing systems," \emph{IEEE Trans. Ind. Electron.}, vol. 62, no. 12, pp. 7920-7928, Dec. 2015.

\bibitem{fengave}
J. Feng, Z. Liu, C. Wu, and Y. Ji, ``A\protect{VE}: autonomous vehicular edge computing framework with aco-based scheduling,'' \emph{IEEE Trans. Veh. Technol.}, vol. 66, no. 12, pp. 10660-10675, Dec. 2017.

\bibitem{Sun2018ICC}
Y.~Sun, X.~Guo, S.~Zhou, Z. Jiang, X.~Liu, and Z.~Niu, ``Learning-based task offloading for vehicular cloud computing systems," \emph{IEEE Int. Conf. Commun. (ICC) 2018}, accepted.


\bibitem{Jiang2017IoT}
Z.~Jiang, S.~Zhou, X.~Guo, and Z.~Niu, ``Task replication for deadline-constrained vehicular cloud computing: Optimal policy, performance analysis and implications on road traffic," \emph{IEEE Internet Things J.}, vol. 5, no. 1, pp. 93-107, Feb. 2018.

\bibitem{chen2013cmab}
W. Chen, Y. Wang, and Y. Yuan. ``Combinatorial multi-armed bandit: General framework and applications,"  \emph{Int. Conf. on Machine Learning (ICML)}, Atlanta, GA, USA, Jun. 2013.

\bibitem{cvpr2010}
M.~Grundmann, V.~Kwatra, M.~Han, and I.~Essa, ``Efficient hierarchical
graph-based video segmentation,'' in \emph{Proc. IEEE Conf. Comput. Vis.
	Pattern Recog. (CVPR)}, San Francisco, CA, USA, Jun. 2010.

\bibitem{auer2002finite}
P.~Auer, N.~Cesa-Bianchi, and P.~Fischer, ``Finite-time analysis of the
multiarmed bandit problem,'' \emph{Machine learning}, vol.~47, no. 2-3, pp.
235--256, 2002.

\bibitem{chen2016cmab}
W. Chen, W. Hu, F. Li, J. Li, Y. Liu, and P. Lu, ``Combinatorial multi-armed bandit with general reward functions," \emph{Advances in Neural Information Processing Systems}, vol. 29, 2016.

\bibitem{goel2010how}
A. Goel, S. Guha, and K. Munagala, ``How to probe for an extreme value," \emph{ACM Trans. on Algorithms}, vol. 7, no. 1, Nov. 2010.

\bibitem{abdulla2016vehicle}
M. Abdulla, E. Steinmetz and H. Wymeersch, "Vehicle-to-vehicle communications with urban intersection path loss models," in \emph{Proc. IEEE Global Commun. Conf. (GLOBECOM)}, Washington, DC, USA, Dec. 2016.










\end{thebibliography}

\end{document}